\documentclass[a4paper,11pt]{llncs}

\textwidth 17cm
\textheight 24.0cm
\topmargin -10mm
\oddsidemargin 2mm
\evensidemargin 2mm

\newcommand{\GF}{\mathrm{GF}}
\newcommand{\cL}{\mathcal{L}}
\newcommand{\defeq}{:=}
\usepackage{nicefrac}

\usepackage{booktabs}
\usepackage{graphicx}

\usepackage{multirow}   
\usepackage{makecell}   

\PassOptionsToPackage{table}{xcolor}
\usepackage{xcolor}
\usepackage{colortbl} 
\usepackage{pifont}

\usepackage{float}

 \usepackage{graphicx}
\usepackage[T1]{fontenc}
\usepackage{url}
\usepackage{mathtools}
\usepackage{nicefrac}
\usepackage{tcolorbox}
\usepackage{braket}
\usepackage{tikz}
\usetikzlibrary{positioning, shapes}

\definecolor{darkgreen}{rgb}{0.0, 0.5, 0.0}  

 \usepackage{framed, color}
\definecolor{shadecolor}{rgb}{0.79,0.87,1}

\definecolor{headerblue}{RGB}{0,85,145}
\definecolor{rowblue}{RGB}{230,240,255}

\usepackage{amsmath}
\usepackage{amssymb}                        
\usepackage{amsfonts}
\usepackage{mathtools}
\usepackage{pstricks}
\usepackage{pst-all}
\usepackage{multirow}
\usepackage{slashbox}
\usepackage{url}
\usepackage{enumitem}
\usepackage{tikz}
\usetikzlibrary{shapes,arrows}
\usetikzlibrary{shapes, arrows, positioning, calc}



\def\thebibliography#1{\section*{References\markboth
 {REFERENCES}{REFERENCES}}\list
 {[\arabic{enumi}]}{\settowidth\labelwidth{[#1]}\leftmargin\labelwidth
 \advance\leftmargin\labelsep
 \usecounter{enumi}}
 \def\newblock{\hskip .11em plus .33em minus -.07em}
 \sloppy
 \sfcode`\.=1000\relax}



\usepackage{ntheorem}
\theorembodyfont{\normalfont}

\usepackage[utf8]{inputenc}
\pagenumbering{arabic}

\usepackage{algorithm}
\usepackage{algpseudocode}

\title{LLM-Text Watermarking based on Lagrange Interpolation}

\begin{document}
\pagestyle{headings}
\renewcommand{\labelitemi}{$\bullet$}
\author{
Jaros\l{}aw Janas\inst{1} \and
Pawe\l{} Morawiecki\inst{1} \and
Josef Pieprzyk\inst{1,2}
}
\institute{
Institute of Computer Science, Polish Academy of Sciences, Warsaw, Poland \and
Data61, CSIRO, Sydney, Australia
}
\maketitle

\begin{abstract}
The rapid advancement of LLMs (Large Language Models) has established them as a foundational technology for many AI- and ML-powered human–computer interactions. A critical challenge in this context is the attribution of LLM-generated text --- for example, identifying the specific language model that generated it or the individual user who prompted the model. This capability is essential for combating misinformation, fake news, misinterpretation, and plagiarism. One of the key techniques for addressing this challenge is digital watermarking. 

This work presents a watermarking scheme for LLM-generated text based on Lagrange interpolation, enabling the recovery of a multi-bit watermark even when the text has been redacted by an adversary. The core idea is to embed a continuous sequence of points $(x, f(x))$ that lie on a single straight line. During extraction, the algorithm recovers the original points along with many spurious ones, forming an instance of the Maximum Collinear Points (MCP) problem, which can be solved efficiently. Experimental results demonstrate that the proposed method is scalable and effective, allowing the embedding of a multi-bit watermark. 
\end{abstract}

\begin{keywords}
digital watermarking, finite fields, galois field arithmetic, lagrange interpolation, large language models, maximum collinear points, multi-bit watermark, natural language processing, text attribution
\end{keywords}

\section{Introduction}
\label{sec:introduction}
Artificial Intelligence (AI) and Machine Learning (ML) have advanced significantly in recent years and
have given rise to powerful models based on deep learning. In particular, Deep Neural Networks (DNNs) have enabled machines to perform complex cognitive tasks such as image recognition, speech processing, and natural language understanding with accuracy almost equivalent to human performance. A pivotal advancement in this domain is the emergence of Large Language Models (LLMs). They can be seen as deep learning architectures trained on massive textual datasets, which can generate coherent, contextually relevant, and human-like text. LLMs operate by learning rich representations of language that capture its semantic, syntactic, and pragmatic patterns. These models, often based on the Transformer architecture, are pre-trained on diverse corpora and fine-tuned for specific tasks such as question answering, summarisation, translation, and dialogue. Their capabilities stem from their scale, both in terms of parameters and data, which allows them to generalise across a wide range of linguistic contexts.

The growing adoption of LLMs offers a wide array of practical benefits, including automated text correction, assistance with writing, and content refinement. However, alongside these advantages lies a more troubling aspect: the potential misuse of LLMs. One such concern arises when individuals present AI-generated content as their own, falsely claiming authorship. This challenge naturally raises an important question: how can we embed a watermark into LLM-generated text that allows a recipient to reliably trace its origin, whether to a specific user or to the LLM system itself?
The objective of such watermarking is to make it possible to attribute the text to its true source, thereby establishing authorship beyond reasonable doubt. This capability is especially critical in combating misinformation, propaganda, automated plagiarism, and the spread of fake news.

\section{Overview and Taxonomy of Watermarking in LLMs}

\begin{figure*}[h]
  \centering
  \includegraphics[width=\textwidth]{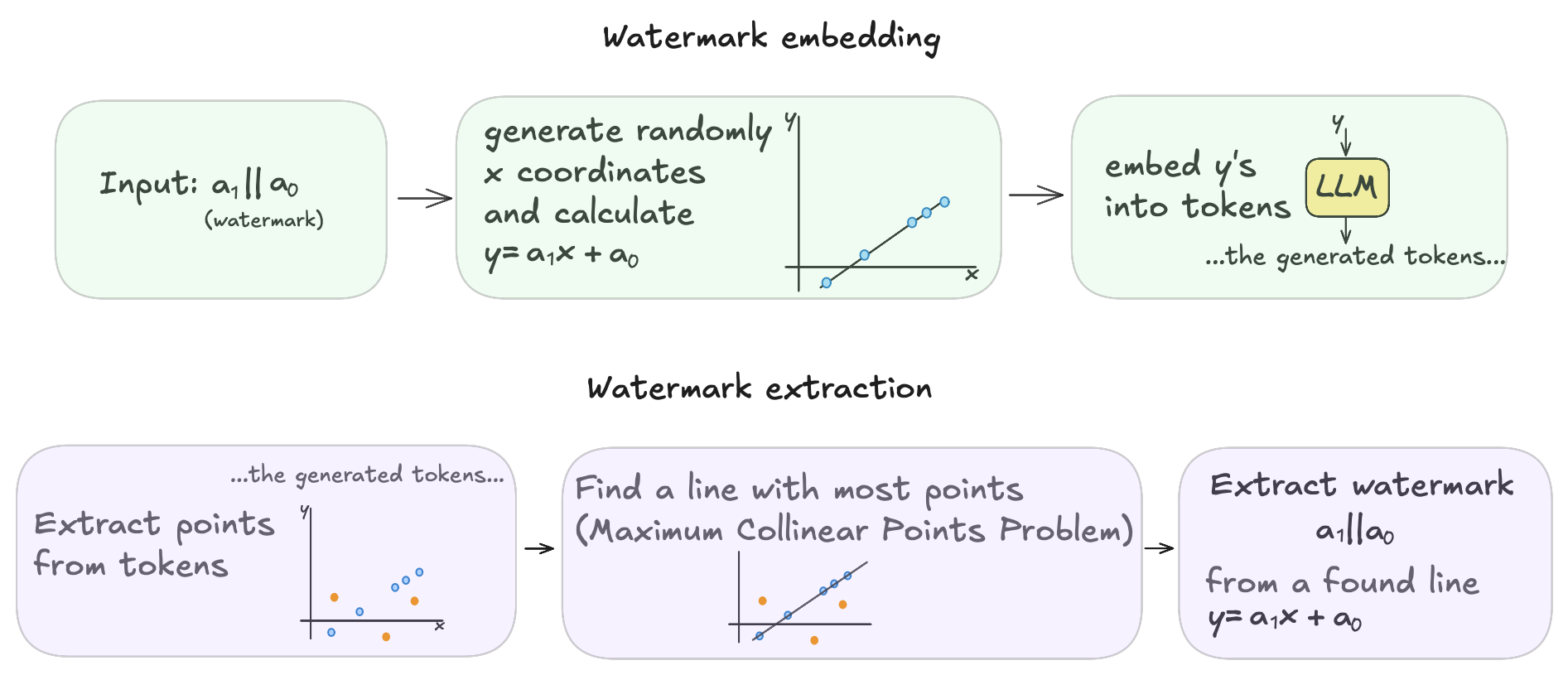}
  \caption{An overview of our watermarking scheme in its basic variant. A multi-bit watermark is represented by two numbers, \( a_1 \) and \( a_0 \), which are used as the coefficients of a linear function \( y = a_1x + a_0 \). We pseudo-randomly generate many \( x \)-coordinates and compute the corresponding \( y \)-coordinates. These \( y \)-values are then embedded into a generated text by biasing the token selection. In the extraction phase, we recover the embedded points from the generated text. Due to sampling variability during generation and/or adversarial tampering with the text, some of the recovered points may be incorrect (orange points). Our goal is to find the line that passes through the largest number of points, a problem formally known as the Maximum Collinear Points Problem. Once this line is found, its coefficients \( a_1 \) and \( a_0 \) constitute the extracted watermark.
}
  \label{fig:watermarking_scheme}
\end{figure*}

The primary motivation behind watermarking has long been the need to verify that a document or a digital artifact originates from its rightful source~\cite{cox2007}. Classical watermarking enables creators to prove ownership of multimedia content beyond a reasonable doubt. In the era of generative AI and LLMs, this requirement is more urgent than ever, due to the growing difficulty in distinguishing between human and machine generated texts. Watermarking for LLMs serves several essential purposes: (1) authenticating the origin of text, (2) preventing unauthorised reuse or plagiarism, (3) identifying misuse (e.g., misinformation, impersonation), and (4) establishing accountability in high-stakes domains (e.g., education, journalism, code generation).
Recent literature has explored a diverse range of strategies for embedding and detecting watermarks in LLM-generated content. Generally, these approaches can be divided into two categories:

\textit{Watermarking During Training} --- in this approach, the watermark is embedded by modifying the model training data or objectives, so that identifiable patterns are learned and can be later detected. In the work~\cite{Sander_2024}, the authors introduced the notion of text radioactivity. It allows to detect whether an LLM has memorised radioactive (marked) data points. TextMarker described in~\cite{Liu_2023} uses backdoor-based membership inference to embed private ownership information. The scheme~\cite{Cui_2025} injects plausible yet fictitious knowledge into training corpora to create semantically valid but watermarkable patterns.
    
\textit{Watermarking at the Logit Level} --- the output probability distribution (logits) is subtly perturbed before sampling tokens, embedding statistical patterns without altering the model weights. The work~\cite{Kirchenbauer_2023} splits the collection of tokens into green and red subsets and biases the token sampling towards green tokens.  Wong et. al.~\cite{wong_2025} describes their logits-to-text watermarking, which jointly optimises encoder-decoder architecture for high-fidelity and high-detectability watermarks. The GumbelSoft watermarking described in~\cite{Fu_2024} improves sampling diversity while maintaining robust detection by refining Gumbel-based logits control.

For a comprehensive survey of LLM watermarking techniques, readers are referred to the review by Liu et al.~\cite{Liu_2023review}. In this paper we are particularly interested in multi-bit watermarks, where embedded watermark consists of more information than just a binary signal (watermarked vs. non-watermarked).

\subsection{Contribution}
We introduce a simple, training-free and model-agnostic watermark for LLM-generated text that encodes a multi-bit payload as the coefficients of a polynomial over $\GF(2^{n})$. The watermark extraction reduces to finding a maximal collinear subset of the recovered points followed by Lagrange interpolation. We provide a transparent probabilistic analysis that upper-bounds the chance of spurious collinearities in $\GF(2^{n})^2$, yielding design guidance for field size and block length, and we quantify how many points must survive noise/adversarial edits for reliable recovery (we do not claim full robustness to arbitrary edits). To mitigate bit-level noise without changing the generator, we add a lightweight \emph{bit-correction} step at extraction that enumerates candidates within a small Hamming radius, trading runtime for recall in a controlled way. We instantiate the scheme with linear $f$ and report an empirical study on two open LLMs. The results show that high match rates are achievable at modest bias and that runtime scales predictably. Finally, we outline extensions for composing longer payloads via multiple lines and for higher-degree polynomials. These are presented as design options and left for comprehensive empirical evaluation in future work.

\section{Preliminaries}
\label{sec:building_blocks}

\subsection{Lagrange Interpolation over $\GF(2^n)$}
\label{subsec:lagrante_interpolation}

In the proposed watermarking scheme, we use Lagrange interpolation to recover a function from a set of distinct points.

Formally, Lagrange interpolation over a finite field $GF(2^n)$ constructs a unique polynomial $f(x) \in GF(2^n)[x]$ of degree at most \( t-1 \) that passes through a given set of \( t \) distinct points \( (x_1, y_1), \dots, (x_t, y_t) \), where all $x_i \in GF(2^n)$ and \( x_i \ne x_j \) for \( i \ne j \). The interpolating polynomial is given by:

\[
    f(x) = \sum_{j=1}^{t} y_j \cdot \ell_j(x),
\]

where each $\ell_j(x) \in GF(2^n)[x]$ is a \textit{Lagrange basis polynomial} defined as:

\[
    \ell_j(x) = \prod_{\substack{1 \leq m \leq t \\ m \ne j}} \frac{x - x_m}{x_j - x_m}.
\]

All arithmetic operations are performed in the field $GF(2^n)$, including the inverses $(x_j - x_m)^{-1}$. The basis polynomials satisfy $\ell_j(x_i) = \delta_{ij}$, ensuring that $f(x_i) = y_i$ for all $i$, where $\delta_{ij}$ denotes the Kronecker delta ($\delta_{ij}=1$ if $i=j$ and 0 otherwise).

This form of interpolation is widely used in finite field cryptography, such as in Shamir’s secret sharing, error-correcting codes, and information dispersal schemes.

\subsection{Embedding Bits into Tokens}
\label{subsec:bits_tokens}

Kirchenbauer et al.~\cite{Kirchenbauer_2023} introduced the concept of “Green” and “Red” tokens, where increasing the logit values of Green tokens leads to a strong preference for them --- a hallmark of watermarked text. In this paper, we use a similar method; however, instead of preferring only green tokens, we select either a green or red token depending on a bit value (green corresponds to 1, red to 0). In this way, we embed a binary vector (watermark) into the stream of generated tokens.

\section{Watermarking scheme based on Lagrange Interpolation}
\label{sec:watermarking_scheme}

In the proposed watermarking scheme, the watermark $W$ constitutes coefficients of the polynomial $f(x)$. We focus on the simplest nontrivial case where $f(x)$ is a linear function $f(x) = a_0 + a_1x \in GF(2^n)$, where the watermark is encoded as $W = (a_0 \| a_1) \in GF(2^n)$. We generate a set of points, each of which satisfies the polynomial $f(x)$. The $x$-coordinates are generated pseudo-randomly and the corresponding $y$-coordinates are embedded into tokens (as described in the previous section).

During the extraction, we solve Maximum Collinear Points Problem to identify the largest set of points, which are collinear (i.e., lie on the same line). Then we apply the Lagrange interpolation to those points to recover the original $f(x)$.  

A desirable feature of the proposed scheme is that even if only a fraction of embedded points survive tampering, they suffice to recover the watermark. The scheme is naturally resilient to partial modifications, making it suitable for adversarial environments where text may be edited with the intent to obscure provenance.

\paragraph{Watermark Embedding} 

\begin{algorithm*}[tbp]
\caption{Watermark Embedding}
\label{alg:watermark_embedding}
\begin{algorithmic}[1]
\State \textbf{Input:} large language model $LLM$, vocabulary $\mathcal{V}$, $2n$-bit watermark $W=(a_0||a_1)$, bias $\delta$, prompt $s^{(-N_p)}, \dots, s^{(-1)}$, secret key $\mathcal{K}$, hash function $H$
\State \textbf{Output:} generated tokens $\mathcal{T}$ with embedded watermark $W$
     \vspace{2mm}
    \hrule
     \vspace{2mm}
    \State $f(x) = a_1x+a_0$ \Comment{specify a polynomial based on the watermark}
    \State $\mathcal{T}=s^{0} = LLM(s^{(-N_p)}, \dots, s^{(-1)})$ \Comment{generate the first token}
    \State $t=1$ \Comment{token counter}
    \While{end of text generation}
        \State $x = H(s^{t-1}, \mathcal{K})$ \Comment{generate pseudo-random $x$-coordinate}
        \State $y=f(x)=(y_{0},y_{1},\ldots,y_{n-1})$ \Comment{compute $y$-coordinate}
        \For{$j= 0$ to $n-1$} \Comment{a loop embedding $y$-coordinate into $n$ consecutive tokens}
            \State $seed = H(s^{t-1+j}, \mathcal{K})$
            \State Using $seed$, pseudo-randomly split vocabulary $\mathcal{V}$ into two sets $\mathcal{V}_0$, $\mathcal{V}_1$
            \State Apply $LLM$ to prior tokens $s^{(-N_p)} \dots s^{(t-1+j)}$ to get a logit vector $v$ over the vocabulary $\mathcal{V}$:
            \State $v = LLM(s^{(-N_p)}, \dots, s^{(t-1+j)})$
            \If{$y_{j}==0$}
                \State $\hat{v}[w]=v[w]+\delta$ for all $w\in \mathcal{V}_0$ \Comment{conditionally bias the logits}
            \Else
                \State $\hat{v}[w]=v[w]+\delta$ for all $w\in \mathcal{V}_1$
            \EndIf
            \State Sample the token $s^{(t+j)}$ from biased logits $\hat{v}$ and append it to $\mathcal{T}$
        \EndFor
        \State $t=t+n$
    \EndWhile
    \State \Return $\mathcal{T}$
\end{algorithmic}
\end{algorithm*}

Algorithm~\ref{alg:watermark_embedding} describes the process of embedding a multi-bit watermark into text generated by an LLM. The watermark is encoded as a linear polynomial $f(x)=a_1x+a_0$, where the coefficients $a_0$ and $a_1$ together form the watermark. The algorithm begins by initializing the text generation with a given prompt, and then iteratively embeds the watermark across the generated tokens. In each iteration, a pseudo-random $x$-coordinate is computed using a cryptographic hash function applied to the previous token and a secret key. This $x$-value is then used to evaluate the watermark polynomial, producing a corresponding $y$-value that represents the next set of bits to embed. For each bit in this $y$-value, the vocabulary is pseudo-randomly partitioned into two subsets, one corresponding to a bit value of 0 and the other to 1. The LLM generates a probability distribution (logits) over the vocabulary, which is then selectively biased toward the appropriate subset based on the bit being embedded. Then, a new token is sampled (greedy sampling) from the biased logits and added to a sequence of the generated tokens \( \cal T \). The process repeats for each $y$-value, with each one embedded across a fixed number of consecutive tokens, until the desired length of text is generated.

\paragraph{Watermark Extraction}

\begin{algorithm*}[htbp]
\caption{Watermark Extraction}
\label{alg:watermark_extraction}
\begin{algorithmic}[1]
\State \textbf{Input:} sequence of tokens $s^{0}, s^{1}, \dots, s^{N}$, size of the watermark (in bits) $2n$, vocabulary $\mathcal{V}$, secret key $\mathcal{K}$, hash function $H$, threshold $\tau$
\State \textbf{Output:} watermark $W=(a_0\|a_1)$ or $\emptyset$ (non-watermarked text)
    \vspace{2mm}
    \hrule
    \vspace{2mm}
    \State $\mathbb{P}=\varnothing$ \Comment{Create an empty list of points}
    \State $t=1$ \Comment{token counter}
    \While{end of tokens}
        \State $x = H(s^{t-1}, \mathcal{K})$ \Comment{extract $x$-coordinate}
        \For{$j= 0$ to $n-1$} \Comment{a loop extracting $y$-coordinate of a single point}
            \State $seed = H(s^{t-1+j}, \mathcal{K})$
            \State Using $seed$, pseudo-randomly split vocabulary $\mathcal{V}$ into two sets $\mathcal{V}_0$, $\mathcal{V}_1$
            \If{$s^{t}\in \mathcal{V}_0$}
                \State $y_j = 0$ \Comment{extract $y_j$ bit of $y$-coordinate}
            \Else
                \State $y_j = 1$
            \EndIf
        \EndFor
        \State $y=(y_{0},y_{1},\ldots,y_{n-1})$
        \State Append $(x,y)$ point to the list $\mathbb{P}$
        \State $t=t+n$
    \EndWhile
    \State Find the maximum collinear points (MCP) in $\mathbb{P}$
    \State Apply the Lagrange interpolation to maximum collinear points to obtain $f(x)=a_0+a_1x$
    \If{MCP $\geq \tau$}
        \State \Return $W=(a_0\|a_1)$
    \Else
        \State \Return $\emptyset$
    \EndIf
\end{algorithmic}
\end{algorithm*}

Algorithm~\ref{alg:watermark_extraction} describes the procedure for extracting a watermark from a sequence of tokens generated by a large language model (LLM), assuming the text was previously watermarked using the embedding method outlined in Algorithm~\ref{alg:watermark_embedding}. 

For each group of $n$ consecutive tokens, a pseudo-random $x$-coordinate is derived using a cryptographic hash function $H$ applied to the previous token and a secret key $K$. This process mimics the original embedding and ensures that the same $x$-values are regenerated during extraction.

For each such $x$-value, the algorithm recovers a corresponding $y$-value by processing $n$ consecutive tokens. Using a deterministic hash-based seed, the vocabulary is split into two subsets: $V_0$ and $V_1$. The current token is compared against these subsets to infer a single bit of the $y$-coordinate. If the token belongs to $V_0$, the corresponding bit is set to zero; otherwise, it is set to one. This procedure is repeated for $n$ tokens to fully reconstruct the $n$-bit $y$-coordinate.

Each recovered point $(x, y)$ is added to a collection $P$. After processing the entire token sequence in this manner, the algorithm attempts to identify the original watermark by solving the Maximum Collinear Points (MCP) problem on the set $P$. This step finds the largest subset of points that lie on a single straight line in the two-dimensional plane.

Once this maximal collinear subset is identified, Lagrange interpolation is applied to these points to reconstruct the polynomial $f(x)$. The coefficients of this polynomial, namely $a_0$ and $a_1$, are then returned as the extracted watermark $W = (a_0 \| a_1)$.

\subsection{Maximum Collinear Points Problem}

During the watermark embedding, a selected set of tokens is biased depending on the value of a watermark bit. However, due to the stochastic nature of sampling, there is no guarantee that a token from the desired set will be selected. As a result, the embedding process may introduce errors. Another potential source of errors is adversarial tampering with the text, such as adding or deleting tokens. Consequently, the verifier is faced with the task of identifying the genuine points (i.e., valid $x$ and $y$ coordinates), which can be formally framed as an instance of the Maximum Collinear Points (MCP) problem. The MCP problem is defined as follows:

\noindent
{\it Instance}: A set $\mathbb S$ of $N$ points in the 2D plane, where each point
$p_i=(x_i,y_i)$; $i=1,2,\ldots,N$.

\noindent
{\it Question}: What is the largest subset ${\mathbb S}' \subset {\mathbb S}$ of points that
are collinear (i.e., lie on a single line)?

There are many efficient algorithms that solve the problem. The basic ones are as follows:
\begin{itemize}

    \item Brute force approach — the algorithm checks all triplets for collinearity. Its time and space complexities are ${\cal O}(N^3)$ and ${\cal O}(1)$, respectively (see~\cite{Cormen_2022}).

    \item Hashing-based approach — the algorithm is similar to the sorting-based one, except that instead of sorting, it applies hashing to store slopes, reducing the time complexity to ${\cal O}(N^2)$ (see~\cite{Cormen_2022}).

\end{itemize}

Also there is a convex-hull approach~\cite{Preparata_1985} and sorting-based method~\cite{Cormen_2022}.

Let us consider the Hashing-based algorithm (see Algorithm~\ref{alg:hash_MCP}).
It determines the largest subset of collinear points among $N$ given points in a 2D plane. The algorithm iterates through each point, treating it as a reference. For every other point, it computes the slope relative to the reference and stores the slope count in a hash map. This hash map allows for quick lookups and updates, ensuring that collinear points are efficiently counted. The maximum count of any slope within the hash map, plus the reference point itself, gives the largest number of collinear points for that reference. By iterating through all points, the algorithm determines the global maximum. Note that vertical lines can be ignored as they are not expressed as a linear function $y=a_1x + a_0$.

\paragraph{Bit correction.}
In practice, recovering the $n$-bit $y$ from $n$ consecutive tokens is noisy due to (i) stochastic sampling that can select a token outside the intended subset, (ii) local edits/insertions/deletions that perturb nearby bits. To increase the chance that a true point $(x,y)$ survives verification, we apply a lightweight bit-correction step at extraction, see Appendix for details.

\section{Security Evaluation}
\label{sec:security_evaluation}

Once a text is generated, an adversary can insert, delete or substitute some tokens. Instead of analyzing each attack separately, we approach the problem in a more analytical way. We aim to determine the threshold beyond which a watermark cannot be extracted from the LLM-generated text. During extraction, we obtain $N=F+R$ number of points of, where $F$ are valid points that lie on the original straight line $f(x)$ and $R$ are spurious points resulting from two sources: (1) noise introduced by the extraction process itself, and (2) modifications introduced by the adversary.
It is reasonable to assume that the $R$ points are randomly scattered on the plane $GF(2^n)\times GF(2^n)$. Clearly, the watermark extraction is successful if any subset of random points forming a straight line is no larger than $F$.

First, consider the following trivial facts about straight lines on the $GF(2^n)\times GF(2^n)$ plane:

\begin{itemize}
    \item Any two points uniquely determine a straight line. This means that there are $2^{2n}$ straight lines.
    \item Any two straight lines can either intersect at a single point or be parallel with no points in common.
    \item For each point, there are $2^n$ straight lines that contain that point.
\end{itemize}

The above properties show that for the number of random points $R$ larger than $2^n$, the correlation among the straight lines grows. Consequently, finding the precise probability distribution of the number of random points that belong to different straight lines is a challenging problem.

\begin{theorem}[Random collinearity in $\mathrm{GF}(2^n)^2$]\label{thm:rand_collinearity}
Let $q = 2^n$. Let $S=\{p_1,\ldots,p_R\}$ be $R$ points sampled i.i.d.\ uniformly from $\mathrm{GF}(q)\times\mathrm{GF}(q)$, and consider only non-vertical lines of the form $y=ax+b$ with $a,b\in\mathrm{GF}(q)$. For any integer $k\ge 3$, the probability that there exists a non-vertical line containing at least $k$ points of $S$ satisfies

\[
    \Pr\big[\exists\, k\text{-collinear points}\big]
    \;\le\;
    \binom{R}{k}\, q^{-(k-2)}.
\]

A proof of the theorem is given in Appendix.
\end{theorem}

Theorem~\ref{thm:rand_collinearity} provides us with a tool to evaluate the probability that Algorithm~\ref{alg:watermark_extraction} extracts the watermark correctly. For a \emph{fixed} non-vertical line $\ell$, the probability that it contains exactly $k$ of the $R$ points is

\[
    \Pr\big[|\ell\cap S| = k\big] \;=\; \binom{R}{k}\left(\frac{1}{q}\right)^{k}\!\left(1-\frac{1}{q}\right)^{R-k}.
\]

\begin{example}
    Consider a toy example, where calculations are done in $GF(2^8)$, i.e. $n=8$. We have $F=4$ points lying on a straight line and $R=16$ random points. The probability that four random points belong to a straight line is
    
    \[
        \binom{16}{4}\left(\frac{1}{256}\right)^{2}\!\left(1-\frac{1}{256}\right)^{12}\;\approx\; 0.0265,
    \]
    
    Consequently, the probability of correct recovery of the watermark is $\approx 0.98$.
\end{example}

\section{Experimental Results}
\label{sec:experimental_results}

We compared our approach with other top five methods. The results of the experiments are given in Table \ref{tab:comparison}. The experiment setting is identical to the one used in \cite{Qu:2024aa}. A watermarked text length is around 200 tokens. Texts were generated with Llama 2-7B model and prompts were taken from the OpenGen dataset \cite{opengendataset}. All experiments used a consistent logit bias of $\delta=6$. For longer watermarks (24 and 32 bits), we use the bit correction mechanism with $c=1$. The threshold $\tau$ was selected to obtain False Positive Rate FPR$\leq1\%$.
\begin{table}[h]
\centering
\resizebox{0.5\textwidth}{!}{
\begin{tabular}{l|cccc}
    \toprule
    Watermark size &16&20&24&32 \\
    \midrule
    (Fernandez et al., 2023)  \cite{Fernandez2023ThreeBricks}           & 99.6 & 99.2 & 98.0 & NA \\
    (Wang et al., 2024)       \cite{Wang2024CodableWatermarking}        & 98.8 & 98.4 & 97.2 & NA \\
    (Yoo et al., 2024)        \cite{Yoo2024MultiBitWatermark}           & 73.6 & 49.2 & 30.4 & 8.4 \\
    (Cohen et al., 2025)      \cite{Cohen2025WatermarkingAdaptiveUsers} & 88.8 & 78.4 & 65.6 & 27.2 \\
    (Qu et al., 2024)         \cite{Qu:2024aa}                          & 98.0 & 97.6 & 96.0 & 94.0 \\
    Ours                                                                & 98.6 & 96.0 & 93.9 & 85.9 \\
    
   \bottomrule
\end{tabular}
}
\vspace{3mm}
\caption{Comparison of methods on match rate(\%) for different watermark lengths (in bits).}
\label{tab:comparison}
\end{table}

For two baseline methods, extraction time grows exponentially with watermark size and becomes infeasible in practice (denoted as `NA').
In essence, the match rates of the presented method are very close to the state-of-the-art \cite{Qu:2024aa,Fernandez2023ThreeBricks}  with the extraction time taking only a fraction of a second. 

Additionally, we evaluated five models from the \texttt{Llama} and \texttt{Mistral} families using two datasets \textit{Essays} \cite{Schuhmann2023EssaysWithInstructions} and \textit{HC3} \cite{guo-etal-2023-hc3}. The watermark's performance was tested across different watermark sizes and bit correction levels ($c$); all experiments used a consistent logit bias of $\delta=6$. Each experiment was set up to generate $208$ watermarked tokens using $100$ prompts per dataset. All the generations with fewer than $208$ tokens were filtered out.

The results of our experiments are shown in Table \ref{tab:llmw_results}. For both datasets and models, applying bit correction (Appendix \ref{apdxsec:bit_correction}) at the $c=1$ level usually shows some improvement in watermark recovery. However, there exists a delicate balance between the number of meaningful watermark blocks and random ones that are introduced by bit correction. As this threshold is crossed, the new blocks become detrimental to the MCP algorithm, as can be seen in the results for level $2$ bit correction.
The extraction time (MCP computation) with correction $c=1$ is still only a fraction of a second on a single PC (without any parallelization). 
\begin{table}[h]
\centering
\resizebox{0.3\columnwidth}{!}{
\begin{tabular}{|c|l|c|c|c|}
    \hline
    $W$ size & Model & \multicolumn{3}{c|}{Bit Correction $c$} \\
    \hline
    & & 0 & 1 & 2 \\
    \hline
    \multirow{5}{*}[-4ex]{16}
    & \makecell[l]{Llama 2 \\ 7B}            & 98.3           & 94.2          & 76.9 \\
    \cline{2-5}
    & \makecell[l]{Llama 3.2 \\ 3B}          & 98.7           & 95.7          & 88.2 \\
    \cline{2-5}
    & \makecell[l]{Llama 3.2 \\ 3B Instruct} & 89.8           & 88.0          & 56.8 \\
    \cline{2-5}
    & \makecell[l]{Mistral \\ 7B}            & 98.2           & 94.4          & 84.4 \\
    \cline{2-5}
    & \makecell[l]{Mistral \\ 7B Instruct}   & 91.2           & 96.2          & 54.8 \\
    \hline
    \multirow{5}{*}[-4ex]{32}
    & \makecell[l]{Llama 2 \\ 7B}            & 83.9           & 91.1          & 60.2 \\
    \cline{2-5}
    & \makecell[l]{Llama 3.2 \\ 3B}          & 94.8           & 91.6          & 80.0 \\
    \cline{2-5}
    & \makecell[l]{Llama 3.2 \\ 3B Instruct} & 27.1           & 48.9          & 33.8 \\
    \cline{2-5}
    & \makecell[l]{Mistral \\ 7B}            & 90.2           & 95.2          & 65.7 \\
    \cline{2-5}
    & \makecell[l]{Mistral \\ 7B Instruct}   & 14.4           & 46.6          & 24.1 \\
    \hline
\end{tabular}
}
\vspace{2mm}
\caption{Match rates obtained with our method. Prompts taken from Essays and HC3 datasets.}
\label{tab:llmw_results}
\end{table}

\section{Adversarial Attacks}
\label{sec:adversarial_attacks}

\begin{figure*}[t]
  \centering
  \includegraphics[width=\textwidth]{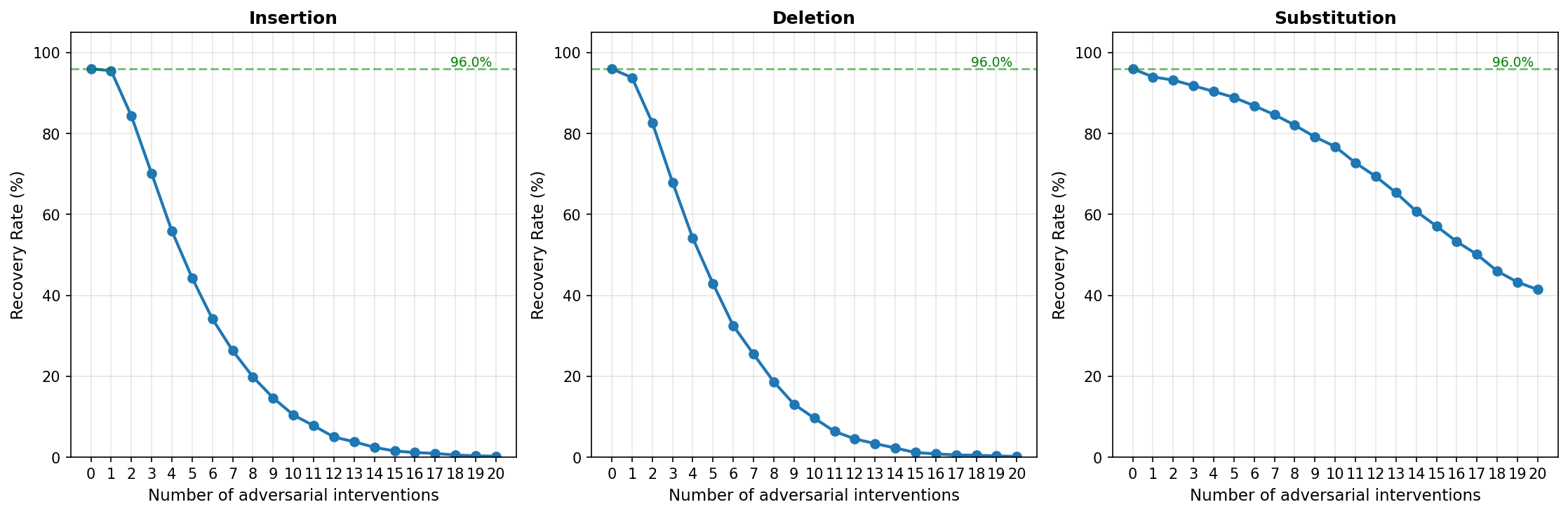}
  \caption{Effect of adversarial token-level interventions on watermark recovery.
Recovery rates are shown for substitution, insertion, and deletion attacks with a total budget of 10\% modified tokens and a text length of 200 tokens. The number of interventions controls the fragmentation of the attack (i.e., more interventions result in the attack being more evenly distributed throughout the text). Baseline performance without adversarial modifications is shown for reference.
}
  \label{fig:adversarial_attacks}
\end{figure*}

Once a watermarked text is released, an adversary may attempt to obscure or destroy the embedded watermark by modifying the generated text. In this section, we analyze the robustness of our scheme against three fundamental classes of adversarial interventions: \emph{substitution}, \emph{insertion}, and \emph{deletion}.

\subsection{Attack Model and Scope}

We consider an adversary with full access to the generated text but without knowledge of the secret key used for watermark embedding. The adversary can perform a bounded number of token-level interventions, affecting at most a fixed fraction of the text (10\% in our experiments). The goal of the adversary is to prevent reliable recovery of the embedded watermark.

\subsection{Substitution Attacks}

Substitution attacks replace existing tokens with alternative tokens. Crucially, substitutions preserve the overall token sequence length and therefore do not disrupt block alignment. As a result, each watermark block still corresponds to the intended group of tokens, although individual bits within a block may be corrupted.

This type of attack manifests primarily as \emph{bit noise} in the recovered $(x,y)$ points rather than structural damage. Since MCP-based recovery is inherently tolerant to a moderate number of spurious points, the watermark can still be recovered as long as enough valid points remain collinear. Consequently, substitution attacks are the least destructive class of interventions for our scheme, and the watermark exhibits partial robustness even without additional protection.

\subsection{Insertion and Deletion Attacks (Desynchronization)}

Insertion and deletion attacks are significantly more damaging for our method because they alter the token stream length and thereby destroy block alignment. Even a single insertion or deletion can cause all subsequent watermark blocks to be misaligned, producing a cascade of invalid $(x,y)$ points during extraction. This effect is referred to as \emph{desynchronization}.

While MCP recovery can tolerate random noise, large-scale desynchronization rapidly increases the number of spurious points and may prevent the true collinear structure from dominating. Therefore, the basic scheme is inherently fragile with respect to insertion and deletion attacks and offers limited robustness against them.

\subsection{Experimental Evaluation}

We evaluate robustness under adversarial interventions affecting 10\% of the tokens, considering substitution, insertion, and deletion attacks. Figure~\ref{fig:adversarial_attacks} reports the watermark recovery rate as a function of the number of  \emph{adversarial interventions} applied to the text.

Across all attack types, recovery performance degrades gracefully as the number of interventions increases. As expected, substitution attacks exhibit the slowest degradation due to preserved synchronization whereas insertion and deletion attacks are more harmful. These results confirm that, while the proposed scheme is not fully robust against arbitrary editing, it can tolerate a non-trivial level of adversarial manipulation.

\section{Conclusion}\label{sec:conclusions}
This work introduces a novel watermarking scheme for LLM-generated texts. The core idea is to represent the watermark as coefficients of a straight line $f(x)$ over a finite field, and embed a sequence of points $(x_i,f(x_i))$ lying on this line into the LLM-generated text. A watermark is recoverable even after partial editing or adversarial manipulation. We demonstrate that the scheme is lightweight, straightforward to implement, and well-suited for real-world applications.

The scheme is naturally extendable to support longer multi-bit watermarks. One straightforward approach is to embed multiple lines, each corresponding to a part of the watermark. We also propose the use of higher-degree polynomials instead of lines. In this approach, the watermark is encoded as the coefficients of a polynomial $f(x)$ and watermark points are sampled from this polynomial. 

Overall, the proposed watermarking framework offers a flexible balance between efficiency and security, and opens the door to new watermarking paradigms based on algebraic structures. Future work may focus on optimising variants with higher-degree polynomials, enhancing resilience against editing attacks, and exploring hybrid schemes combining lines and polynomials to balance robustness and performance.

\section{Appendix}
\subsection{Limitations}
Our scheme assumes that enough $(x,f(x))$ points survive generation noise and light edits to dominate the MCP search; it is not designed to be robust against heavy paraphrasing, translation, aggressive summarization, or large insertion/deletion cascades that desynchronize blocks. 

Theoretical bounds rely on independence assumptions and union bound arguments (and ignore vertical lines), which are informative but loose when $R$ grows relative to $q=2^n$. 

Our implementation and experiments instantiate only linear $f$ and a single-line (or conceptual multi-line) payload. Higher-degree variants and ``top-$t$ lines'' composition are presented as design options rather than validated systems.

\subsection{Proof of Theorem 1}

\begin{proof}[Proof of Theorem~\ref{thm:rand_collinearity}]
Let $q=2^n$ and let $S=\{p_1,\ldots,p_R\}$ be $R$ points sampled i.i.d.\ uniformly from $\GF(q)\times\GF(q)$. We consider only non-vertical lines of the form $y=ax+b$ with $a,b\in\GF(q)$.

\paragraph{Counting non-vertical lines.}
Each ordered pair $(a,b)\in\GF(q)\times\GF(q)$ determines a unique non-vertical line $\ell_{a,b}=\{(x,y): y=ax+b\}$, and distinct pairs give distinct lines. Hence the set $\mathcal{L}$ of non-vertical lines has cardinality $|\mathcal{L}|=q^2$.

\paragraph{Membership probability and the per-line distribution.}
Fix any $\ell\in\cL$. For a single random point $X$ uniformly distributed on $\GF(q)^2$, we have

\[
    \Pr[X\in \ell] \;=\; \frac{|\ell|}{|\GF(q)^2|} \;=\; \frac{q}{q^2} \;=\; \frac{1}{q}.
\]

Because $p_1,\ldots,p_R$ are sampled i.i.d., the random variable

\[
    N_\ell \;\defeq\; |\ell \cap S|
\]

is binomial: $N_\ell \sim \mathrm{Binomial}(R,1/q)$. In particular,

\[
    \Pr\!\big[N_\ell = k\big] \;=\; \binom{R}{k}\left(\frac{1}{q}\right)^k\!\left(1-\frac{1}{q}\right)^{R-k},
\]

which proves the ``per-line'' part of the theorem.

\paragraph{Existence of a $k$-collinear subset on some non-vertical line (union bound)}
For a line $\ell\in\cL$ and a $k$-subset $T\subseteq [R]$ (indexing the points in $S$), let

\[
    E_{\ell,T}\;=\;\{\text{all points } \{p_i\}_{i\in T} \text{ lie on } \ell\}.
\]

Then $\Pr[E_{\ell,T}] = (1/q)^k$, since the $p_i$ are independent and each falls on $\ell$ with probability $1/q$. The event that there exist at least $k$ collinear points on some non-vertical line is contained in

\[
    \bigcup_{\ell\in\cL}\;\bigcup_{T\in \binom{[R]}{k}} E_{\ell,T}.
\]

Applying the union bound,

\[
    \Pr\!\big[\exists\, k\text{-collinear points on a non-vertical line}\big]
    \;\le\;
\]
\noindent
\[
    \sum_{\ell\in\cL}\;\sum_{T\in \binom{[R]}{k}} \Pr[E_{\ell,T}]
    \;=\;
    q^2\cdot \binom{R}{k}\cdot \left(\frac{1}{q}\right)^{k}
    \;=\;
\]
\noindent
\[
    \binom{R}{k}\, q^{-(k-2)}.
\]

This is exactly the claimed bound.

\end{proof}

\subsection{Tradeoff between Block Size and Probability of Success}
\label{apdxsec:prob_of_success}

In watermarking that rely on identifying structures (such as collinear points) over finite fields like $GF(2^t)$, the designer is confronted with a tradeoff in choosing the block size $t$. This tradeoff arises when the underlying communication channel is modeled as a \emph{Binary Symmetric Channel} (BSC) with bit-flip (crossover) probability $p$, representing a large language model (LLM) channel or a natural error-prone medium. On one hand, using a larger field (larger $t$) improves the geometric strength of the scheme—fewer collinear points are needed to identify a unique line, thus increasing robustness to adversarial actions. On the other hand, larger $t$ increases the likelihood that at least one bit in a block is corrupted, which degrades the reliability of receiving correct points.

To find a practical "sweet spot" between error resilience and field size, we examine the following result.

\begin{theorem}
    Let $t$ be the size (in bits) of a block transmitted over a Binary Symmetric Channel (BSC) with bit-flip probability $p$, and let $P = (1 - p)^t$ be the probability that a $t$-bit word is transmitted without error. Suppose $N$ such words are transmitted independently. Then, the probability that at least $n$ of them are received correctly is bounded below by:
    
    \begin{eqnarray*}
        \mathbb{P}[\text{at least } n \text{ correct blocks}] = \\
        \sum_{k = n}^{N} \binom{N}{k} P^k (1 - P)^{N - k} \geq L.
    \end{eqnarray*}
\end{theorem}

\begin{proof}
    Each $t$-bit block is correctly received with probability $P = (1 - p)^t$, and corrupted with probability $1 - P$. Let $X$ be the number of correctly received blocks among $N$ independent transmissions. Then $X$ follows the binomial distribution $X \sim \text{Binomial}(N, P)$. The probability that at least $n$ blocks are correct is:

\[
    \mathbb{P}[X \geq n] = \sum_{k = n}^{N} \binom{N}{k} P^k (1 - P)^{N - k}.
\]

This completes the proof.
\end{proof}

This theorem allows designers to tune the parameters of their schemes. A larger $t$ means fewer collinear points are needed to identify a line in $GF(2^t) \times GF(2^t)$, but the likelihood of receiving a block without error drops exponentially in $t$.

Hence, the selection of $t$ depends on the acceptable communication overhead (total number of transmitted tokens), the crossover probability $p$ (determined by channel quality), and the minimum number of correct blocks $n$ required to successfully identify the embedded structure (e.g., a watermark or secret line).

\begin{example}
Suppose the channel has a bit-flip probability of $p = 0.2$ (i.e., each bit is transmitted correctly with probability $0.8$). The table below shows the probability that an entire block of $t$ bits is received without error, i.e., $P = (1 - p)^t$, for various values of $t$.

    \begin{table}[h!]
    \centering
    \resizebox{0.4\columnwidth}{!}{
    \begin{tabular}{|c|c|c|}
        \hline
        Block size $t$ & $P$ for $p = 0.1$ & $P$ for $p = 0.2$ \\
        \hline
        2 & 0.81    & 0.64    \\
        3 & 0.729   & 0.512   \\
        4 & 0.6561  & 0.4096  \\
        5 & 0.59049 & 0.32768 \\
        6 & 0.53144 & 0.26214 \\
        7 & 0.47830 & 0.20972 \\
        8 & 0.43047 & 0.16777 \\
        \hline
    \end{tabular}
    }
    \vspace{3mm}
    \caption{Probability of a $t$-bit block being received without error over BSC$(p)$}
    \end{table}
    
    Now assume that $t = 6$ and $p = 0.2$, so the probability that a block is received error-free is $P = 0.26214$. We are interested in determining the minimum number of transmitted blocks $N$ such that the probability of receiving at least $n$ correct blocks is at least $L = 0.9$. The results are shown in Table \ref{table_transmitted_blocks}.
    
    \begin{table}[h!]
    \centering
    \resizebox{0.4\columnwidth}{!}{
    \label{table_transmitted_blocks}
    \begin{tabular}{@{}|c|c|c|@{}}
        \hline
        $n$ (Correct) & Minimum $N$ & \#Tokens ($N \cdot t$) \\
        \hline
        4  & 15  & 90 \\
        8  & 34  & 204 \\
        12 & 50  & 300 \\
        16 & 66  & 396 \\
        32 & 132 & 792 \\
        64 & 258 & 1548 \\
        \hline
    \end{tabular}
    }
    \vspace{3mm}
    \caption{Minimum number of transmitted blocks needed to achieve 90\% confidence of receiving at least $n$ correct blocks with $t = 6$, $p = 0.2$}
    \end{table}
\end{example}

\subsection{Bit Correction}
\label{apdxsec:bit_correction}
When decoding a watermarked text, each $n$-bit block represents a $y$ coordinate in the Galois field $GF(2^{n})$; however, the errors introduced during encoding can corrupt the blocks, making them random and not useful for the MCP algorithm. To combat this issue, a $c$-bit error correction algorithm is employed that generates all possible block variants within $c$ bit flips of the recovered block. Mathematically, given a decoded $n$-bit sequence $y$, this produces $\binom{n}{c} = \frac{n!}{c!(n-c)!}$ alternative sequences, each representing a potential correction for up to $c$ simultaneous bit errors. During verification, all new variant $y$ coordinates and their corresponding $x$ values contribute to the MCP solver. This increases both the chances of having a correct sequence in the bag of the coordinates and the required computational power to calculate the new variants and solve the MCP problem.

\subsection{Watermarking with Multiple Lines and High Degree Polynomials}\label{sec:multisecrets}
\subsubsection{Multiple Lines}
Using a single line $f(x)=ax+b\in GF(2^n)$, it is possible to obtain the watermark $W=(a\| b)\in \{0,1\}^{2n}$.
Further expansion of its length is possible by embedding $t$ distinct lines instead of just one. In this case, our MCP algorithm requires a minor adjustment to track the top $t$ straight lines that yield the highest point counts. This results in a collection of lines $\{f_i(x)=a_ix+b_i \;| \; i=0,\ldots,t-1\}$ and their corresponding watermarks $\{W_i=(a_i \|b_i)\; | \; i=0,\ldots,t-1\}$. To form a single $2tn$-bit watermark, we must determine an appropriate order for concatenation. 

The method of concatenation depends on how the watermark is generated by the owner. The owner may either (1) generate individual pieces $W_i$ and concatenate them freely if the watermark is a random string created at the time of watermark embedding, or (2) follow a predetermined order if the watermark already exists and its structure is imposed by the application.
In the case where the author is free to create a new watermark, a simple and deterministic method can be used:

\begin{itemize}
    \item For each watermark component $W_i$, where $i = 0, \ldots, t-1$, compute its digest: $d_i = H(W_i)$, where $H()$ is a cryptographic hash function.
    
    \item Sort the digests ${d_0, \ldots, d_{t-1}}$ in increasing order and reorder the corresponding watermark components $W_i$ accordingly.
    
    \item Concatenate the reordered $W_i$ values into a single sequence.
\end{itemize}

If the watermark is fixed in advance, we can encode the order of components by indexing lines based on the least significant bits (LSB) of their $x$-coordinates. For simplicity, assume there are $t = 2^e$ lines, where $e \in {\mathbb N}$. Then:

\begin{itemize}
    \item Any point whose $x$-coordinate has an $e$-bit LSB equal to $d$ belongs to the line $f_d(x)$.

    \item The lines are ordered by increasing values of $d$, i.e., $f_0(x), f_1(x), \ldots, f_{2^e - 1}(x)$.
    
    \item The watermark $W=W_0\|W_1\|\ldots\|W_{2^e-1}$.
\end{itemize}

\subsubsection{High Degree Polynomials}
An obvious generalisation is a choice of higher degree polynomials for our $f(x)$ instead of straight lines. This means that we deal with the polynomial over $GF(2^n)$ of the degree $t-1$ of the following form

\[
	f(x)=a_0+a_1x+,\ldots,a_{t-1}x^{t-1} \mbox{ where } t\geq 3
\]

Points $\{(x_i,f(x_i)\; | \; i=1,\ldots,N\}$ are translated into appropriate watermarking bits. Clearly, to determine the polynomial $f(x)$, it is enough to know $t$ points. We can apply the well-known Lagrange Interpolation to reconstruct $f(x)$. The watermark $W=(a_0\;\|\;a_1\;\| \; \ldots \;\|\;a_{t-1})$. The problem of recovery, however, becomes interesting when some points are corrupted by our adversary. The verifier in this case faces a task to identify a polynomial of degree $(t-1)$, which contains the maximum number of points. This is to say that we face the Maximum Co-Polynomial Points (MCPP) problem.
\vspace{2mm}
\noindent
The \underline{MCPP problem} is defined as follows:\\
\noindent
{\it Instance}: A set $\mathbb S$ of $N$ points in the 2D plane, where each point
$p_i=(x_i,y_i)$; $i=1,2,\ldots,N$.
\noindent
{\it Question}: What is the largest subset ${\mathbb S}' \subset {\mathbb S}$ of points that
lie on a single univariate polynomial $f(x)$ of degree at most $(t-1)$?
\vspace{1mm}

Some basic algorithms for solving the problem are: brute force polynomial interpolation ~\cite{Berrut2004}, Hough Transform for Polynomials ~\cite{Ballard_1981} or Gr\"{o}bner Basis ~\cite{Sauer_2006}.

Finding a polynomial $f(x)$ of degree $(t-1)$ from $t$ points becomes increasingly challenging as $t$ grows. However, our scenario is slightly different. If the watermarks remain unaltered by the adversary, any collection of $t$ such points will correctly determine the polynomial. In contrast, if the text has been modified, the points extracted from the tampered sections will not lie on the original polynomial $f(x)$.

The verifier must distinguish between two types of points: good ones that lie on $f(x)$ and bad ones, which are scattered randomly in the plane $GF^2(2^n)$. The following theorem evaluates the probability of success when the verifier randomly selects $t$ points, hoping they all lie on $f(x)$.

\begin{theorem}
    Given a collection ${\mathbb P}$ of all watermarks/points recovered from LLM-generated text, where ${\mathbb P} = {\mathbb F} \cup {\mathbb A}$, and ${\mathbb F}$ are points lying on $f(x)$ while ${\mathbb A}$ consists of random points introduced by the adversary. Suppose the verifier selects $t$ points at random, computes $\widetilde{f(x)}$, and checks whether other points are consistent with it. Then, the expected number of trials required to succeed (i.e., to find at least one set of $t$ points entirely from ${\mathbb F}$) is
    
    \[
    	\left(\frac{\ell}{f}\right)^{t},
    \]
    
    where $\ell=\#{\mathbb P}$ and $f=\#{\mathbb F}$.
\end{theorem}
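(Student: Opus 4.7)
\medskip
\noindent\textbf{Proof proposal.} My plan is to model each trial as an independent Bernoulli experiment and then invoke the standard expectation formula for a geometric distribution. First I would fix the experiment: a single trial consists of drawing $t$ points uniformly at random from $\mathbb{P}$, fitting the unique degree-$(t-1)$ polynomial $\widetilde{f(x)}$ through them, and declaring success whenever all $t$ chosen points belong to $\mathbb{F}$ (in which case the fitted polynomial coincides with the true $f(x)$, since $f$ is the unique interpolant of any $t$ of its own points).

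Next I would compute the success probability of one trial. Treating the draws as independent (equivalently, sampling with replacement, or approximating the without-replacement case when $\ell$ is moderately larger than $t$), each drawn point lies in $\mathbb{F}$ with probability $f/\ell$, so
\[
p \;=\; \Pr[\text{all } t \text{ points lie in } \mathbb{F}] \;=\; \left(\frac{f}{\ell}\right)^{t}.
\]
I would note that the more precise hypergeometric expression $\binom{f}{t}/\binom{\ell}{t}$ tends to $(f/\ell)^{t}$ under the same regime and gives the same leading-order behaviour.

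Then I would let $X$ denote the number of trials until the first success. Because trials are independent and identically distributed with success probability $p$, $X$ is geometrically distributed, and the elementary identity $\mathbb{E}[X]=1/p$ yields
\[
\mathbb{E}[X] \;=\; \left(\frac{\ell}{f}\right)^{t},
\]
which is precisely the stated bound.

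The only subtle point, and the step I would be most careful about, is the independence assumption between the $t$ draws within a trial: strictly speaking the verifier samples without replacement, so the exact per-trial success probability is $\binom{f}{t}/\binom{\ell}{t}$ rather than $(f/\ell)^{t}$. I would therefore either (i) state the theorem with the convention of sampling with replacement, which is natural since repeated points give a degenerate interpolation that fails the consistency check harmlessly, or (ii) show that $\binom{f}{t}/\binom{\ell}{t}=(f/\ell)^{t}(1+o(1))$ whenever $t\ll f\le \ell$, so that the bound $(\ell/f)^{t}$ still captures the expected number of trials up to lower-order factors. Everything else is routine.
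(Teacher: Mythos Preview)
Your proposal is correct and follows essentially the same route as the paper: compute the per-trial success probability as $p=(f/\ell)^t$, recognize the number of trials as geometrically distributed, and conclude $\mathbb{E}[X]=1/p=(\ell/f)^t$. Your added discussion of the with- versus without-replacement subtlety is more careful than the paper's own argument, which silently adopts the $(f/\ell)^t$ figure without comment.
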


\begin{proof}
    The probability of success in a single trial is $p = (\nicefrac{f}{\ell})^{t}$. If a trial fails, the next has probability $(1 - p)p$ of being successful. More generally, the probability that the $k$-th trial is the first success is $(1 - p)^{k-1}p$. This defines the geometric probability distribution $P(X = k) = (1 - p)^{k-1}p$, whose expected value is $\nicefrac{1}{p}$. \hfill $\Box$ 
\end{proof}

\begin{example}
    Let $\ell = 1000$, and let the number of good points be $f = 200$. If the watermark is defined by a polynomial of degree $t = 3$, then the expected number of trials is $(\nicefrac{1000}{200})^4 = 625$.
\end{example}

\subsection{False Positive Rate (FPR) from Theorem~1}
\label{app:fpr}

Let

\[
    P=\{(x_i,y_i)\}_{i=1}^{N}\subset \GF(2^n)\times \GF(2^n)
\]

be the set of extracted points used by the detector (one point per decoded
block). The detector declares a watermark present if

\[
    MCP(P)\ge \tau,
\]

where $MCP(P)$ is the maximum number of points in $P$ lying on a single
(non-vertical) line and $\tau$ is the detection threshold.

\paragraph{Number of blocks and $n$.}

If the verification segment contains $T$ tokens and each watermark block encodes $n$ bits, then the number of decoded blocks (hence extracted points) is typically

\begin{equation}
\label{eq:num-blocks}
    N \approx \left\lfloor \frac{T}{n}\right\rfloor .
\end{equation}

\paragraph{Definition of FPR.}

Under the null hypothesis (no watermark), $P$ behaves like $N$ random points in $\GF(2^n)^2$. The false positive rate is

\begin{equation}
\label{eq:fpr-def}
\begin{aligned}
    FPR(\tau;N,n)
    &= \Pr\!\left[MCP(P)\ge \tau \,\middle|\, \text{no watermark}\right].
\end{aligned}
\end{equation}

\paragraph{FPR bound via Theorem~1.}

Let $q=2^n$. By Theorem~1, for $\tau\ge 3$,

\begin{equation}
\label{eq:fpr-bound}
\begin{aligned}
    FPR(\tau;N,n)
    &\le \binom{N}{\tau}\, q^{-(\tau-2)} \\
    &= \binom{N}{\tau}\, 2^{-n(\tau-2)} .
\end{aligned}
\end{equation}

\paragraph{Choosing $\tau$ for a target FPR.}

Given a desired upper bound $\alpha$ (e.g., $\alpha=0.01$), choose the smallest
integer $\tau\ge 3$ such that

\begin{equation}
\label{eq:tau-choice}
\begin{aligned}
    \binom{N}{\tau}\, 2^{-n(\tau-2)} \le \alpha .
\end{aligned}
\end{equation}

Equivalently (taking $\log_2$),

\begin{equation}
\label{eq:tau-choice-log}
\begin{aligned}
    \log_2 \binom{N}{\tau} - n(\tau-2) \le \log_2 \alpha .
\end{aligned}
\end{equation}

\subsubsection{Worked example}

Assume $n=8$ (so $q=256$), $N=25$ blocks, and threshold $\tau=5$. Then

\begin{equation}
\label{eq:ex1}
\begin{aligned}
    FPR
    &\le \binom{25}{5}\,256^{-3}
    = 53130 \cdot 256^{-3} \\
    &= 53130 \cdot \frac{1}{16777216}
    \approx 3.17\times 10^{-3}.
\end{aligned}
\end{equation}

Thus, the bound predicts an FPR below $0.4\%$ for these parameters.

It is worth noting that the bound in \eqref{eq:fpr-bound} is conservative (union bound style) and an empirical FPR is often smaller.

\subsection{Quality of Watermarked Texts}

Our watermarking scheme, like other existing approaches, modifies logit values by adding a small bias to a subset of logits (green tokens). We use the same logit bias as the state-of-the-art method proposed by \cite{Qu:2024aa}, and their analysis can be directly applied to our approach. Specifically, they show that using a watermark with bias $\delta = 6$ only slightly degrades text quality, as measured by perplexity, and that the perplexity of LLM-generated text is similar with and without watermarking. In addition, they conducted a small qualitative study with human judges, concluding that any text quality degradation caused by the watermark is subtle according to human evaluation.

\subsection{Computational complexity of MCP-based detection and practical optimizations}
\label{app:mcp-complexity}

\paragraph{Baseline complexity}

A standard approach to MCP in the plane over a finite field is the
``anchor-and-count-slopes'' method: for each anchor point $p_i$, compute slopes
to all other points and count duplicates (points sharing the same slope with
$p_i$ lie on the same line through $p_i$). Using hashing for slope counts, this
runs in $\mathcal{O}(N^2)$ time and $\mathcal{O}(N)$ extra memory, which is already efficient for moderate $N$ (e.g., tens to a few hundreds of
points), but could become expensive if $N$ is allowed to scale with very long
texts.

\paragraph{Optimization 1: fixed verification window (constant-time in text length)}

A simple and effective mitigation is to cap the number of blocks inspected by
running detection only on a fixed-length token window of size $T_0$
(e.g., $T_0=200$ tokens). This bounds the number of points to

\[
    N_0 \approx \left\lfloor \frac{T_0}{n}\right\rfloor ,
\]

and thus bounds MCP computation to $\mathcal{O}(N_0^2)$, which is constant with
respect to total document length. The remainder of a long text can be treated as
redundancy: it does not increase worst-case inference cost, while still being
useful for robustness (see below).

\paragraph{Optimization 2: multi-window / redundancy for robustness}

Using only a single fixed window may reduce detection robustness if that segment
is heavily edited. A robust variant keeps inference efficient by evaluating MCP
on $K$ non-overlapping (or randomly sampled) windows of length $T_0$ and taking
the maximum score:

\[
    S = \max_{j=1,\dots,K} MCP(P_j),
\]

where $P_j$ are points extracted from window $j$. This costs
$\mathcal{O}(K N_0^2)$, still constant in total text length for fixed $K$ and
$T_0$. In practice, small $K$ (e.g., $K\in\{2,3,5\}$) provides strong robustness
while remaining fast.

\paragraph{Optimization 3: early stopping with threshold $\tau$}

Detection only requires deciding whether $MCP(P)\ge \tau$. The slope-counting
procedure can stop early as soon as any anchor point yields a line with at least
$\tau$ points. Conversely, if an anchor point has fewer than $\tau-1$ remaining
points, it cannot create a $\tau$-point line and can be skipped. These checks
often reduce runtime substantially in both positive and negative cases.

\begin{algorithm*}[htbp]
\refstepcounter{algorithm}
\caption{Hashing-based MCP (Maximum Collinear Points)}
\label{alg:hash_MCP}
\begin{algorithmic}[1]
    \State \textbf{Input:} A set of $N$ points ${\mathbb S} = \{p_1, p_2, \dots, p_N\}$ in $GF(2^n) \times GF(2^n)$
    \State \textbf{Output:} Maximum number of collinear points
        \vspace{2mm}
    \hrule
    \vspace{2mm}
    \State maxCount $\gets 1$
    \For{each point $p_i \in {\mathbb S}$}
        \State Create an empty hash map $H$
        \For{each point $p_j \in {\mathbb S}$ such that $p_j \neq p_i$}
            \State Compute slope $s = \frac{y_j - y_i}{x_j - x_i}$ {\Comment{\textcolor{blue}{\textit{Handle vertical lines separately}}}}
            \State Increment $H[s]$ by 1
        \EndFor
        \State maxCount $\gets \max($maxCount, $\max_{s \in H} (H[s] + 1))$
    \EndFor
    \State \Return maxCount
\end{algorithmic}
\end{algorithm*}

\bibliographystyle{IEEEtran}
\bibliography{LLM,custom}

\end{document}